\newcommand{\R}{\mathbb{R}}
\newcommand{\ra}[1]{\textcolor{black}{#1}}
\newcommand{\Cov}{\mathrm{Cov}}
\newcommand{\mtp}{\text{MTP}_2}
\newcommand{\E}{\mathrm{E}}
\newcommand{\hatSigma}{\widehat{\Sigma}}
\DeclareMathOperator*{\argmax}{arg\,max}
\DeclareMathOperator*{\argmin}{arg\,min}
\newtheorem{thm}{Theorem}[section]
\newtheorem{lem}[thm]{Lemma}
\theoremstyle{definition}
\newtheorem{defn}[thm]{Definition}
\newtheorem{ex}[thm]{Example}
\theoremstyle{remark}
\title{Covariance Matrix Estimation under Total Positivity \\for Portfolio Selection}
\author[a,b,c]{Raj Agrawal} 
\author[b,c]{Uma Roy}
\author[b,c]{Caroline Uhler}
\address[a]{Computer Science and Artificial Intelligence Laboratory, Massachusetts Institute of Technology}
\address[b]{Laboratory for Information and Decision Systems, Massachusetts Institute of Technology}
\address[c]{Institute for Data, Systems and Society, Massachusetts Institute of Technology}
\begin{document}
\begin{abstract}
Selecting the optimal Markowitz porfolio depends on estimating the covariance matrix of the returns of $N$ assets from $T$ periods of historical data. Problematically, $N$ is typically of the same order as $T$, which makes the sample covariance matrix estimator perform poorly, both empirically and theoretically. While various other general purpose covariance matrix estimators have been introduced in the financial economics and statistics literature for dealing with the high dimensionality of this problem, we here propose an estimator that exploits the fact that assets are typically positively dependent. This is achieved by imposing that the joint  distribution of returns be \emph{multivariate totally positive of order 2} ($\mtp$). This constraint on the covariance matrix not only enforces positive dependence among the assets, but also regularizes the covariance matrix, leading to desirable statistical properties such as sparsity. Based on stock-market data spanning  thirty years, we show that estimating the covariance matrix under $\mtp$ outperforms previous state-of-the-art methods including shrinkage estimators and factor models. 
\end{abstract}
\maketitle


\section{Introduction} \label{sec:introduction}
Given a universe of $N$ assets, what is the optimal way to select a portfolio? When "optimal" refers to selecting the portfolio with minimal risk or variance for a given level of expected return, then the solution, commonly known as the \emph{Markowitz optimal portfolio}, depends on two quantities: the vector of expected returns $\mu^*$ and the covariance matrix between returns $\Sigma^*$ \citep{markowitz}. In practice, $\mu^*$ and $\Sigma^*$ are unknown and must be estimated from historical returns. Since $\Sigma^*$ requires estimating $O(N^2)$ parameters while $\mu^*$ only requires estimating $O(N)$ parameters, the main challenge lies in estimating $\Sigma^*$. A naive strategy is to use the sample covariance matrix $S$ to estimate $\Sigma^*$. However, this estimator is known to have poor properties \citep{marcenko_dist,wachter1978_issue_samp_cov,bai1993_issue_samp_cov,johnstone2001,johnstone2004}, as can be seen by the following degrees-of-freedom argument (see also \citet[Section 3.1]{dynamic_cov_est}): as is common when daily or monthly returns are used, the number of historical data points $T$ is of the order of $1000$ while the number of assets $N$ typically ranges between 100 and 1000. Since in this case $T \ll N^2$,  only $O(1)$ effective samples are used to estimate each entry in the covariance matrix, making the sample covariance matrix perform poorly out-of-sample~\citep{LS,NLS,dynamic_cov_est}. 

Given the importance and the statistical challenges of covariance matrix estimation in the high-dimensional setting, this problem has been widely studied in the statistics and financial economics literature. In the statistical literature, a number of estimators have been proposed based on \emph{banding} or \emph{soft-thresholding} the entries of $S$ \citep{bickel2008,pourahmadi_banding,cai2010}. Such estimators, which are equivalent to selecting the covariance matrix closest to $S$ in Frobenius norm subject to the covariance matrix lying within a specified $L_1$ ball, were proven to be \emph{minimax optimal} with respect to the Frobenius norm and spectral norm loss \citep{cai2010}. However, such estimators may not output a covariance matrix estimate that is positive definite, which is required for the Markovitz portfolio selection problem. Moreover, while such estimators are optimal in a minimax sense for the Frobenius and spectral norm loss, these losses may not be relevant to measure the excess risk that results from using an estimate of $\Sigma^*$ instead of $\Sigma^*$ itself to compute the Markovitz portfolio; see \citet[Section 4.1]{dynamic_cov_est} for details. 

Another reason to consider estimators beyond those in \citet{bickel2008,pourahmadi_banding,cai2010} is that these methods do not  exploit some of the structure that often holds in $\Sigma^*$. In particular, the eigenspectrum of $\Sigma^*$ is often structured; we expect to find several important "directions" (i.e., eigenvectors) that well-approximate $S$. For example, under the \emph{capital asset pricing model}~\citep{capm}, the eigenspectrum of $\Sigma^*$ contains a dominant eigenvector 
corresponding to the market; as a consequence, $S$ could be well-approximated by the sum of a rank one matrix (the "market component") and a diagonal matrix (the "idiosyncratic error component"). More generally, covariance matrix estimators based on low-rank approximations of $S$ are advantageous statistically since such estimators have smaller variance
\footnote{If the covariance matrix estimator has rank $M$, then the effective number of parameters estimated is $O(NM)$ instead of $O(N^2)$ where $M \ll N$.}. In practice, low-rank covariance estimates are based on explicitly provided factors \citep{fama_french3,fama_french5,capm}, or data-driven factors learned by performing \emph{principal component analysis} (\emph{PCA}) on $S$ \citep{POET, fan2008}. Another related popular strategy for estimating $\Sigma^*$ is based on the assumption that the eigenvalues of $\Sigma^*$ are well-behaved, and exploit results from random matrix theory~\citep{elkaroui2008,marcenko_dist}. In particular, various methods considered regularizing the eigenvalues of $S$ \citep{LS,NLS,dynamic_cov_est, Jagannathan2003, deMiguel2013}; collectively, these methods can be regarded as particular instances of empirical Bayesian shrinkage estimators~\citep{emp_bayes,LS,stein1956}. Finally, a number of papers have proposed covariance estimators based on the assumption that the precision matrix is \emph{sparse}~\citep{graphical_lasso, RWRY11}. Such a constraint is motivated by the fact that a sparse precision matrix implies that the induced undirected graphical model associated with the joint distribution is sparse, which is desirable both for better interpretability and robustness properties.

In this paper, we propose a new type of covariance matrix estimator for portfolio selection based on the assumption that the underlying distribution is \emph{multivariate totally positive of order 2} ($\text{MTP}_2$), which exploits a particular type of structure in the covariance matrix. $\text{MTP}_2$ was first studied in~\citet{fkg_ineq,karlin_and_rinott,Bolviken,KR83} from a purely theoretical perspective and later also in the context of statistical modeling, in particular graphical models, in~\citet{mle_exists,fallat2017,mle_mtp2,Ising_mtp2}. 
$\text{MTP}_2$ is a strong form of positive dependence that can be used in combination with the above methods for covariance estimation. The structure we exploit is motivated by the observation that asset returns are often positively correlated since assets typically move together with the market. As an illustration, consider the sample correlation matrix $S$ and its inverse $S^{-1}$ based on the 2016 monthly returns of global stock markets shown in Figure~\ref{fig:samp_cov_mat}. Note that all correlations (i.e., off-diagonal entries of $S$) are positive, and all partial correlations (i.e., negative of the off-diagonal entries of $S^{-1}$) are positive. 

A multivariate Gaussian distribution with mean $\mu$ and positive definite covariance matrix $\Sigma$ is $\text{MTP}_2$ if and only if $(\Sigma^{-1})_{ij} \leq 0$ for all $i \neq j$. A precision matrix satisfying this condition is called a \emph{symmetric M-matrix}~\citep{Bolviken,karlin_and_rinott}, and implies that all correlations and partial correlations are non-negative~\citep{Ostrowski,dellacherie2014inverse}. Hence, a multivariate Gaussian  fit to the 2016 daily returns of the global stock market indices considered in Figure~\ref{fig:samp_cov_mat} is $\text{MTP}_2$. This is quite remarkable, since uniformly sampling correlation matrices, e.g.~using the method described in~\citet{Joe}, shows that less than $0.001\%$ of all $5\times 5$ correlation matrices satisfy the $\text{MTP}_2$ constraint. Since factor analysis models with a single factor are $\text{MTP}_2$
when each observed variable has a positive dependence
on the latent factor~\citep{wermuth2014star}, the capital asset pricing model implies $\text{MTP}_2$ when all market betas are positive, which further motivates studying $\text{MTP}_2$ in the context of portfolio selection.


\begin{figure}[!t]
	\centering
	\includegraphics[width=12cm]{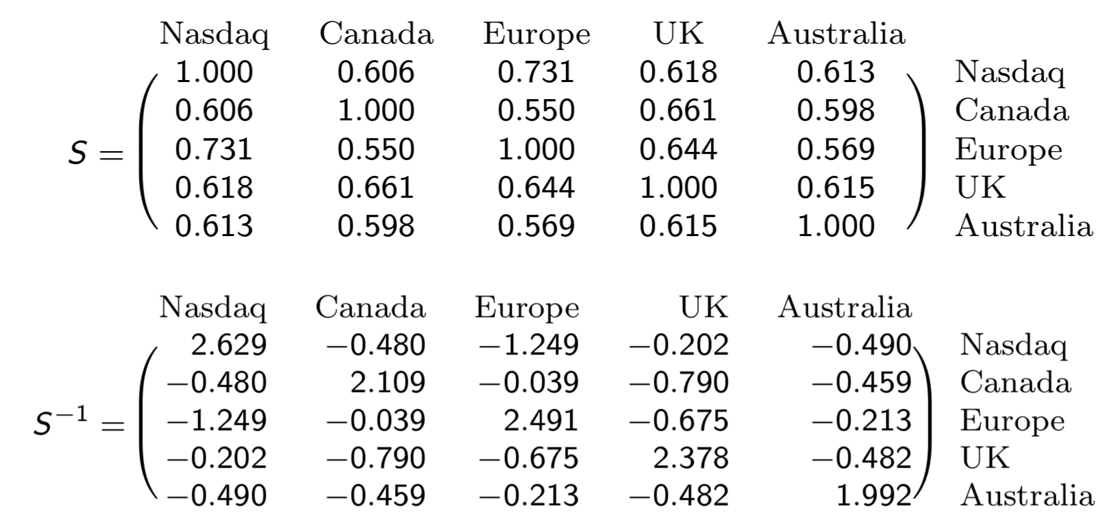}
	\caption{The sample correlation matrix of global stock market indices based on monthly returns from 2013-2016. "Canada", "Europe", "UK", and "Australia" refer to the country names in the MSCI Developed Markets Index}. Notice that the covariance matrix contains all positive entries and the precision matrix is an M-matrix which implies that the joint distribution is $\text{MTP}_2$ (see \cref{sec:mult_gaus} for details).  \label{fig:samp_cov_mat}
\end{figure}

In this paper, we provide (1) a new $\mtp$ covariance matrix estimator to model heavy-tailed returns data and (2) an extensive empirical comparison demonstrating the advantages of this new estimator on stock-market data spanning thirty years. The remainder of this paper is organized as follows: In \cref{sec:background} we review 
the Markowitz portfolio problem and 
existing techniques for covariance matrix estimation that we benchmark our method against in \cref{sec:results}. In \cref{sec:MTP2}, we define $\text{MTP}_2$ more precisely, motivate its usage for financial returns data in more detail, and describe a method to perform covariance estimation under this constraint. Finally, in \cref{sec:results} we empirically compare our method with several competing methods on historical stock market data and show that covariance matrix estimation under $\text{MTP}_2$ outperforms state-of-the-art methods for portfolio selection in terms of out-of-sample variance, i.e.~risk. {All data and code for this work is available at \href{https://github.com/uhlerlab/MTP2-finance}{https://github.com/uhlerlab/MTP2-finance}.}

\section{Problem Statement} \label{sec:background}
After introducing some notation, we will review 
the Markowitz portfolio selection problem, explain how  it relates to covariance matrix estimation, and discuss various covariance estimation techniques. 

\subsection{Notation}
We assume throughout that we are given $N$ assets, which we index using the subscript $i$, from $T$ dates (e.g.~days), which we index using the subscript $t$. 
We let $r_{i, t}$ denote the observed return for asset $i$ at date $t$ for $1 \leq i \leq N$ and $1 \leq t \leq T$. The vector $r_t := (r_{1,t}, \ldots, r_{N, t})^T$ consists of the returns of each asset on day $t$. Finally, $\mu_t := \E[r_t]$ and $\Sigma_t := \Cov(r_t)$ denote the expected returns and the covariance matrix of the returns for day $t$, respectively.

\subsection{Optimal Markowitz Portfolio Allocation}
Markowitz portfolio theory concerns the problem of assigning weights $w\in\mathbb{R}^N$ to a universe of $N$ possible assets in order to minimize the variance of the portfolio for a specified level of expected returns $R$. More precisely, the optimal portfolio weights $w \in \mathbb{R}^N$ on day $t$ are found by solving
%
\begin{equation} \label{eq:full_markowitz}
\begin{aligned}
& \underset{w\in\mathbb{R}^N}{\text{minimize}}
& & w^T \Sigma^*_t w\\
& \text{subject to}
&& w^T\mu_t^* = R \quad \textrm{and}\quad \sum_{i=1}^{N} w_i = 1,
\end{aligned}
\end{equation}
where  $\mu^*_t$ and $\Sigma^*_t$ denote the true expected returns and covariance matrix of the returns for day $t$. In practice, $\mu^*_t$ and $\Sigma^*_t$ are unknown and must be estimated from historical returns. Since the main difficulty lies in estimating $\Sigma^*_t$ (it requires estimating $O(N^2)$ parameters as compared to $O(N)$ for  $\mu^*_t$), a widely used tactic to specifically evaluate the quality of a covariance matrix estimator is by finding the \emph{global minimum variance} portfolio, which does not require estimating $\mu^*$  \citep{Haugen35,Jagannathan2003}. Such a portfolio can be found by solving 
%
%
%
\begin{equation}
\label{eq_a}
\begin{aligned}
& \underset{w\in\mathbb{R}^N}{\text{minimize}}
& & w^T \Sigma^*_t w\\
& \text{subject to}
&& \sum_{i=1}^{N} w_i = 1,
\end{aligned}
\end{equation}
%
where $w$ is chosen to minimize the variance of the portfolio. Replacing the unknown true covariance matrix of returns $\Sigma_t^*$ by some estimator $\hat{\Sigma}_t$ yields the following analytical solution for \cref{eq_a}:
%
%
%
\begin{equation} \label{weight-estimate-formula}
    \hat{w} := \frac{\hatSigma_t^{-1}\mathbf{1}}{\mathbf{1}^T \hatSigma_t^{-1}\mathbf{1}}.
\end{equation}
%

A natural choice for $\hatSigma_t$ is the sample covariance matrix. Unfortunately, as discussed in Section~\ref{sec:introduction}, the sample covariance matrix is a poor estimator of the true covariance matrix, particularly in the high-dimensional setting when the number of assets $N$ exceeds the number of periods  $T$ (the sample size). Although the sample covariance matrix is an unbiased estimator of the true covariance matrix, in the high-dimensional setting it is not invertible, has high variance, and is not \emph{consistent} (e.g., the eigenvectors of $S$ do not converge to those of $\Sigma^*$~\citep{marcenko_dist,johnstone2001,wachter1978_issue_samp_cov,bai1993_issue_samp_cov,johnstone2004}). 
Making structural assumptions about the true covariance matrix allows the construction of estimators that have lower variance with only a small increase in bias. 

\section{Covariance Matrix Estimation under $\text{MTP}_2$} \label{sec:MTP2}


We propose a new structure for modeling asset returns data, namely by exploiting that assets are often positively dependent. 
In particular, we consider distributions that are $\textit{MTP}_2$. 
\begin{defn}[\cite{fkg_ineq,KarlinRinott80}] 
A distribution on $\mathcal{X}\subseteq \R^M$ is \emph{multivariate totally positive of order 2} (\emph{$\text{MTP}_2$}) if its density function $p$ satisfies
\begin{equation*} \label{def:mtp2}
p(x)p(y) \leq p(x \wedge y) p(x \vee y) \quad \text{for all} \quad x, y \in \mathcal{X},
\end{equation*}
where $\wedge, \vee$ denote the coordinate-wise minimum and maximum, respectively. 
\end{defn}
$\text{MTP}_2$ is a strong form of positive dependence that implies most other known forms including e.g.~\emph{positive association}; see for example~\citet{Colangelo} for a recent overview. Note that when $p(x)$ is a strictly positive density, then \cref{def:mtp2} is equivalent to $p(x)$ being \emph{log-supermodular}. Log-supermodularity has a long history in ecomomics, in particular in the context of complementarity and comparative statics ~\citep{Topkis78,Milgrom_Roberts,milgram_super,Topkis,athey_super,comp_advan_super}. 

In \cref{fig:samp_cov_mat}, we provided an example of 5 global stock indices, where the sample distribution is $\text{MTP}_2$. To further motivate studying $\text{MTP}_2$ as a constraint for covariance matrix estimation for portfolio selection we discuss its connection to latent tree models in Section~\ref{sec:latent_tree}. In particular, we show that the capital asset pricing model implies that the resulting joint distribution is $\text{MTP}_2$ when all "market betas" (also known as "market loadings" or "factor coefficients") are positive. Then in Section~\ref{sec:mult_gaus}, we discuss how to perform covariance matrix estimation under $\text{MTP}_2$ in the Gaussian setting. Finally, in Section~\ref{sec:MTP_heavy}, we propose how to extend this estimator to heavy-tailed distributions.

\subsection{Latent Tree Models} \label{sec:latent_tree}

\begin{figure}[!t]
\centering
\includegraphics[width=12cm]{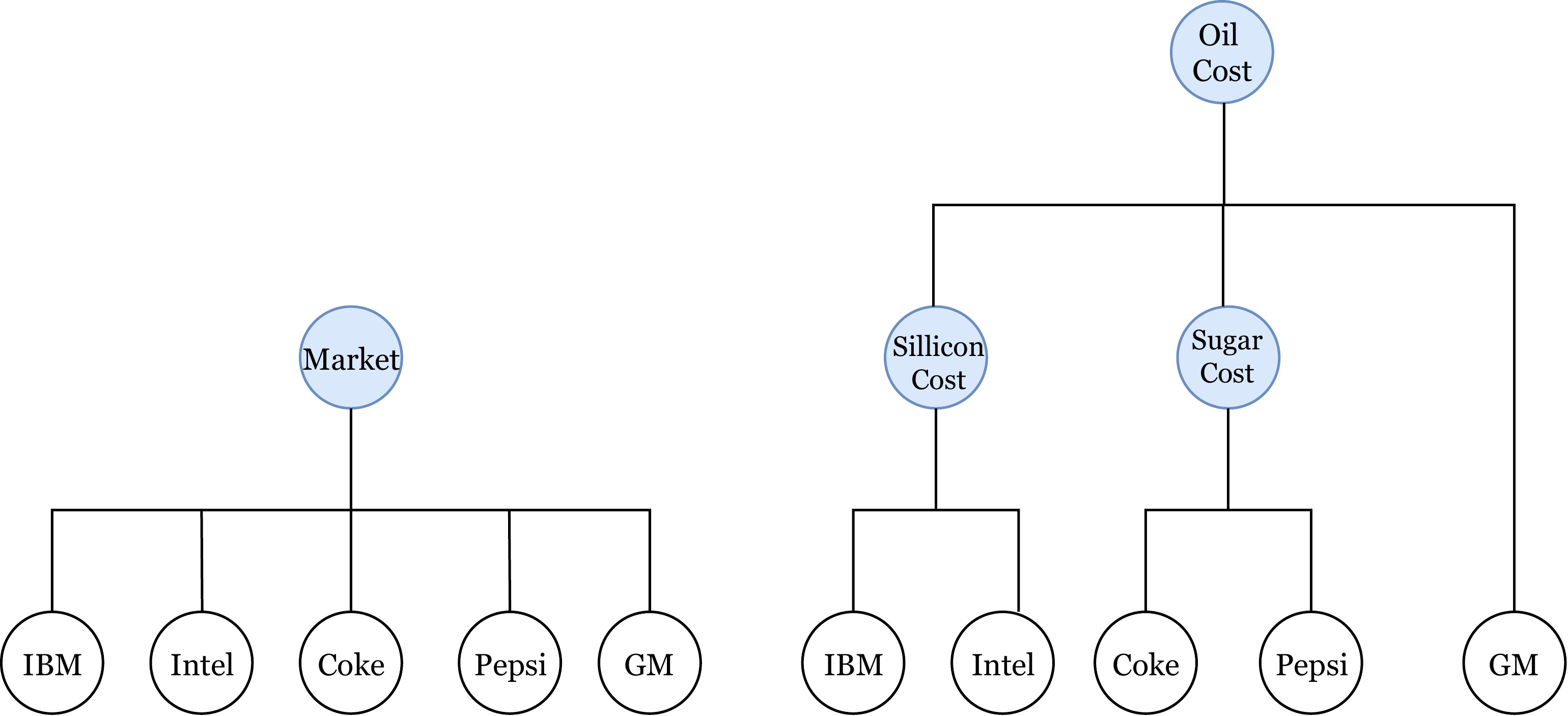}
\caption{Shaded nodes represent factors that are potentially unobserved, and unshaded nodes are the observed returns of different companies. Figure (left) represents a simple model where an unobserved market variable drives the returns of all stocks as in the CAPM. Figure (right) represents a more complicated latent tree model where latent sector-level factors drive the returns of different assets.} \label{fig:tree_models}
\end{figure}

A powerful framework to model complex data such as stock-market returns is through models with latent variables. \emph{Factor models}, which are widely used for covariance estimation for portfolio selection (see Section~\ref{factor_models}) are examples thereof. A \emph{latent tree model} is an undirected graphical model on a tree (where every node represents a random variable that may or may not be observed and any two nodes are connected by a unique path). For financial applications, latent tree models have been used, for example, for unsupervised learning tasks, such as clustering similar stocks, or for modeling and learning the dependence structure among asset returns~\citep{latent_tree_learn,latent_tree_cluster}. A factor analysis model with a single factor is a particular example of a latent tree model consisting of an unobserved root variable that is connected to all the observed variables; see \cref{fig:tree_models} for a concrete example of a single-factor analysis model and a more general latent tree model. The prominent \emph{capital asset pricing model} (\emph{CAPM}) is  a single-factor analysis model: the return of stock $i$ is modeled as
\begin{equation*}
r_i = r_f + \beta_i (r_m - r_f) + u_i \quad \beta_i \in \R,
\end{equation*}
where $r_f$ is known as the risk-free rate of return, $r_m$ is the market return, and $u_i$ is the uncorrelated, zero mean idiosyncratic error term. Typically, the parameters $\beta_i$ are positive, which explains why the covariance between stock returns is usually positive\footnote{Over 97\% of the entries of the sample covariance matrix of 1000 assets (based on daily returns from 1980-2015) are  positive.}. Non-negative correlation is in general necessary but not sufficient to imply $\text{MTP}_2$. The following theorem states that for latent tree models non-negative correlation already implies $\text{MTP}_2$. The proof follows from \citet[Theorem 5.4]{mle_mtp2}.

\begin{thm} \label{thm:tree_mtp2}
Let $X \in \R^M$ follow a  multivariate Gaussian distribution that factorizes according to a tree. 
If $\Cov(X)\geq 0$, then $X$ is $\text{MTP}_2$ and any marginal of $X$ is $\text{MTP}_2$. 
\end{thm}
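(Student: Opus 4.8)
The plan is to prove the first claim by showing that the joint density is log-supermodular, which for a strictly positive density is equivalent to $\mtp$ by the definition above. Since $X$ is Gaussian with positive definite covariance and factorizes according to a tree $\mathcal{T} = (V, E)$, its only maximal cliques are the edges, so the Hammersley--Clifford factorization reads $p(x) = \prod_{i \in V} p_i(x_i) \prod_{(i,j) \in E} \frac{p_{ij}(x_i, x_j)}{p_i(x_i) p_j(x_j)}$, where $p_i$ and $p_{ij}$ are the univariate and bivariate Gaussian marginals. Because a sum of supermodular functions is supermodular, it suffices to check that the logarithm of each factor is supermodular on $\R^M$; the univariate factors are trivially supermodular (they are additively separable from the remaining coordinates), so the whole argument localizes to the edge factors.

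For a fixed edge $(i,j)$, first I would note that the univariate terms $\log p_i$ and $\log p_j$ have vanishing mixed partial $\partial^2 / \partial x_i \partial x_j$, so the relevant cross-derivative of $\log\big(p_{ij}/(p_i p_j)\big)$ equals $\partial^2 \log p_{ij} / \partial x_i \partial x_j$. A direct computation for a bivariate Gaussian with correlation $\rho_{ij}$ gives $\partial^2 \log p_{ij}/\partial x_i \partial x_j = \rho_{ij}/\big[(1 - \rho_{ij}^2)\sigma_i \sigma_j\big]$, which is non-negative exactly when $\rho_{ij} \geq 0$. The hypothesis $\Cov(X) \geq 0$ forces every pairwise correlation to be non-negative, and for an edge the pairwise correlation is precisely the edge correlation $\rho_{ij}$; hence each edge factor is log-supermodular, and therefore so is $p$, giving that $X$ is $\mtp$. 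Equivalently, one can argue through the precision matrix $K = \Sigma^{-1}$: the tree Markov property forces $K$ to be supported on the diagonal and the edges, and the same sign computation shows each edge entry of $K$ is non-positive, so $K$ is a symmetric M-matrix and the characterization recalled in \cref{sec:mult_gaus} applies.

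For the marginal claim I would use that $\mtp$ is preserved under marginalization; in the Gaussian setting this can be made fully explicit. Writing $S \subseteq V$ for the retained coordinates, the marginal of $X$ is Gaussian with precision matrix equal to the Schur complement $K_{SS} - K_{S S^c} K_{S^c S^c}^{-1} K_{S^c S}$. Since $K$ is a symmetric M-matrix (as just shown, using that $X$ is $\mtp$), all its off-diagonal and off-block entries are non-positive and $K_{S^c S^c}^{-1}$ is entrywise non-negative, because the inverse of a symmetric M-matrix is non-negative (equivalently $\Sigma \geq 0$). Each off-diagonal entry of the subtracted term is then a sum of products of the form $(\leq 0)(\geq 0)(\leq 0) \geq 0$, so the Schur complement again has non-positive off-diagonal entries. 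Thus the marginal precision is a symmetric M-matrix and the marginal is $\mtp$.

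The hard part is the sign step in the second paragraph, and it is exactly where the tree hypothesis is indispensable: as noted before the statement, non-negative correlations do not in general imply non-positive off-diagonal entries of the precision (non-negative partial correlations), so without the tree structure the implication fails. The tree is what lets us localize supermodularity to the edges, where the relevant correlation is a genuine pairwise correlation and is therefore controlled directly by $\Cov(X) \geq 0$. I expect the remaining bookkeeping --- justifying the clique factorization and the product-closure of log-supermodularity --- to be routine.
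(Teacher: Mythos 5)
Your proof is correct, but it is genuinely different from what the paper does: the paper offers no self-contained argument at all, simply deferring to \citet[Theorem 5.4]{mle_mtp2}, whereas you reprove the result from scratch. Your route --- the junction-tree/Hammersley--Clifford factorization $p(x) = \prod_i p_i(x_i)\prod_{(i,j)\in E} p_{ij}(x_i,x_j)/\bigl(p_i(x_i)p_j(x_j)\bigr)$, the second-derivative criterion for log-supermodularity, and the explicit computation $\partial^2 \log p_{ij}/\partial x_i \partial x_j = \rho_{ij}/\bigl[(1-\rho_{ij}^2)\sigma_i\sigma_j\bigr]$ --- has the virtue of making transparent exactly where the tree hypothesis enters (it localizes the mixed partials of $\log p$ to the edges, where the relevant quantity is a genuine pairwise correlation controlled by $\Cov(X)\geq 0$) and why the implication fails without it; the citation buys only brevity. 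Two small remarks. First, your marginalization step is more work than needed: $\text{MTP}_2$ is closed under marginalization in general (Karlin--Rinott), as you note, so the Schur-complement computation is optional; it does, however, keep the whole argument inside elementary linear algebra, which is a legitimate reason to include it. Second, in that computation the parenthetical ``(equivalently $\Sigma \geq 0$)'' is misleading: $K_{S^cS^c}^{-1}$ is \emph{not} the block $\Sigma_{S^cS^c}$ of $\Sigma = K^{-1}$, so its entrywise non-negativity does not come from the hypothesis $\Cov(X)\geq 0$; what you actually need (and what your main clause correctly invokes) is that the principal submatrix $K_{S^cS^c}$ of the symmetric M-matrix $K$ is again a symmetric M-matrix, whose inverse is therefore entrywise non-negative. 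With that phrase deleted, the argument is complete.
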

%
%
%

While working with CAPM is convenient from a theoretical perspective, its simplicity often comes at the expense of underfitting. In particular, there commonly are additional sector-level factors that drive returns. Identifying these factors is an active area of research; for instance, CAPM was recently extended to include three and then five new factors~\cite{fama_french3,fama_french5}. 
However, identifying relevant factors is in general a challenging task; for example, learning the structure of a latent tree model from data is known to be NP-hard~\citep{latent_np_hard}. We here propose to instead take a \emph{structure-free} approach by constraining the joint distribution over the observed variables to be $\text{MTP}_2$. This approach provides more flexibitlity than modeling stock returns using latent tree models and at the same time allows overcoming the computational bottleneck of fitting a latent tree model. In particular, we show in \cref{sec:mult_gaus} that an $\text{MTP}_2$ covariance matrix estimator can be computed by solving a convex optimization problem.


\subsection{$\text{MTP}_2$ Covariance Matrix  Estimation Assuming Multivariate Gaussian Returns} \label{sec:mult_gaus}
For multivariate Gaussian distributions, a necessary and sufficient condition for a distribution to be $\text{MTP}_2$ is that the precision matrix $K \coloneqq \Sigma^{-1}$ is an \emph{M-matrix}, i.e., $K_{ij} \leq 0$  for all $i \neq j$; or equivalently,
all partial correlations are nonnegative. \citep{karlin_and_rinott}. Following \citet{mle_mtp2}, we consider the \emph{maximum likelihood estimator} (\emph{MLE}) of $K$ subject to $K$ being an M-matrix. 

Recall that the log-likelihood function $\mathcal{L}$ of $K$ given data $D \coloneqq \{r_t\}_{t=1}^T \overset{\text{i.i.d.}}{\sim} N(0, K)$ is, up to additive and multiplicative constants, given by
\begin{equation}
\mathcal{L}(K; D) = \log \det K - \text{trace}(KS),
\end{equation}
where $S\in \mathbb{R}^{N \times N}$ denotes the sample covariance matrix of the returns $\{r_t\}_{t=1}^T$ or log-returns.
Without the $\text{MTP}_2$ constraint, the MLE of $K$ is obtained by maximizing $\mathcal{L}(K; D)$ over the set of all positive semidefinite matrices and is given by $S^{-1}$ when $N \leq T$ (i.e., the dimension of the covariance matrix is less than the number of samples). Note that when $N \geq T$, the MLE does not exist, i.e., the log-likelihood function is unbounded above.  
Remarkably, by adding the constraint that $K$ is an M-matrix (i.e., that the distribution is $\text{MTP}_2$), then the MLE 
\begin{equation} \label{eq:mtp2_gaus}
\hat{K} = \arg \max_{K \succeq 0} \ \log \det K - \text{trace}(KS) \quad \text{subject to} \quad K_{ij} \leq 0 \quad \forall i \neq j,
\end{equation}
exists with probability 1 when $T \geq 2$ for \emph{any} dimension $N$~\cite{mle_exists,mle_mtp2}. Similarly, the popular CLIME estimator, which we review in Eq.~(\ref{eq:clime}) in the next section, could be extended to the MTP$_2$ setting by adding the constraints $K_{ij}\leq 0$ for all $i\neq j$. It would be of interest to understand its properties. 

%
%

The fact that a unique solution exists for \cref{eq:mtp2_gaus} for any $N$ when $T \geq 2$ suggests that the $\text{MTP}_2$ constraint adds considerable regularization for covariance matrix estimation. In addition, the problem in \cref{eq:mtp2_gaus} 
is a convex optimization problem and computationally efficient coordinate-descent algorithms have been described for computing $\hat{K}$~\citep[{cf.}][]{mle_mtp2,mle_exists}. Finally, another desirable property is that the $\text{MTP}_2$ covariance matrix estimator $\hat{K}$ in \cref{eq:mtp2_gaus} is usually \emph{sparse}~\cite[Corallary 2.9]{mle_mtp2}, which reduces the intrinsic dimensionality of the model and hence reduces the variance of the estimator. Note that this sparsity is achieved without the need of any tuning parameter, an immediate advantage over methods that explicitly add sparsity-inducing $L_1$ penalties such as the graphical lasso~\citep{graphical_lasso, RWRY11} discussed in \cref{sec:graph_lasso}. Nevertheless, to relax the $\mtp$ constraint, one could always introduce a Lagrange multiplier (i.e., tuning parameter) to penalize for violating the $\mtp$ constraint\footnote{ Such a strategy can also be used to perform a sensitivity analysis to the $\mtp$ assumption. We thank an anonymous reviewer for raising this point. We leave an empirical evaluation of this strategy to future work.}. 


\subsection{Extensions to Heavy-Tailed Distributions}
\label{sec:MTP_heavy}


Asset returns are often computed as $r_t = \log\left(\frac{p_t}{p_{t-1}}\right)$, where $p_t$ is the price of the asset at time $t$. Stock returns may be heavy tailed, and in such cases the Gaussian assumption made for estimating the covariance matrix in Section~\ref{sec:mult_gaus} may be problematic. Transelliptical distributions form a convenient class of distributions that contain the Gaussian distribution as well as heavy-tailed distributions such as the $t$-distribution. In the following, we provide an extension of the estimator in  \cref{eq:mtp2_gaus} 
to transelliptical distributions.

A random vector $X$ with density function $p(x)$, mean $\mu \in \R^M$ and covariance matrix $\Sigma \in \R^{M \times M}$ follows an \emph{elliptical distribution} if its density function can be expressed as  
\begin{equation*} \label{eq:elliptical_g}
   g((x - \mu)^T \Sigma^{-1} (x - \mu))
\end{equation*}
for some function $g$.
%
%
More generally, $X$ follows a \emph{transelliptical distribution} if there exist monotonically increasing functions $f_i$, $i=1, \dots , M$, such that $(f_1(X_1), \cdots, f_M(X_M))$ follows an elliptical distribution. We denote the covariance matrix of this elliptical distribution by $\Sigma_f$. The following result provides a necessary condition for a transelliptical distribution to be $\text{MTP}_2$.

 \begin{thm} \label{thm:trans_m_mat} Suppose that the joint distribution of $(X_1, \cdots, X_M)$ is $\text{MTP}_2$ and transelliptical, i.e., there exist  increasing functions $f_i$, $i=1, \dots , M$, such that the density function of $(f_1(X_1), \cdots, f_M(X_M))$ can be written as $g((x - \mu)^T \Sigma_f^{-1} (x - \mu))$. Then, $\Sigma_f^{-1}$ is an M-matrix.
 \end{thm}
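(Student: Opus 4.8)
The plan is to reduce the $M$-dimensional claim to a sequence of two-dimensional claims by exploiting that both the $\mtp$ property and the elliptical family are stable under the two operations we need: coordinatewise monotone transformations and conditioning. Writing $K := \Sigma_f^{-1}$, the goal is to show $K_{ij}\le 0$ for each fixed pair $i\neq j$, since this is exactly the M-matrix condition. First I would pass from $X$ to the elliptical vector $Y := (f_1(X_1),\dots,f_M(X_M))$ and argue that $Y$ is $\mtp$; then I would condition $(Y_i,Y_j)$ on the remaining coordinates to obtain a \emph{bivariate} elliptical $\mtp$ distribution whose precision matrix is the $2\times 2$ principal submatrix $K_{\{i,j\}}$, from which the sign of the off-diagonal entry can be read off directly.

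For the transformation step, recall that $\mtp$ is equivalent to log-supermodularity of the density. Because each $f_i$ is increasing, the coordinatewise map commutes with the lattice operations $\wedge$ and $\vee$, so the composition $p_X\circ f^{-1}$ remains log-supermodular; the change-of-variables Jacobian factorizes as a product of univariate functions $\prod_i (f_i^{-1})'(y_i)$, which is log-modular and therefore does not affect log-supermodularity. Hence $Y$ is $\mtp$. For the conditioning step, I would use two facts. First, $\mtp$ is preserved under conditioning: restricting a log-supermodular density to the slice $\{Y_{B}=c\}$, with $B:=\{i,j\}^{c}$, is still log-supermodular in the free coordinates, since $(y_A,c)\vee(y_A',c)=(y_A\vee y_A',c)$ and likewise for $\wedge$. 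Second, conditional distributions of an elliptical vector are again elliptical, and the conditional scatter matrix of $Y_{\{i,j\}}$ given $Y_{B}$ is the Schur complement $\Sigma_{AA}-\Sigma_{AB}\Sigma_{BB}^{-1}\Sigma_{BA}$ with $A=\{i,j\}$; by the block-inversion identity this Schur complement equals $(K_{\{i,j\}})^{-1}$, so the conditional \emph{precision} matrix is exactly the submatrix $K_{\{i,j\}}$, independently of the conditioning value $c$.

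It then remains to handle the bivariate case. The conditional law of $(Y_i,Y_j)$ is bivariate elliptical and $\mtp$, hence positively associated, so its covariance is nonnegative \citep{Colangelo}; since for an elliptical distribution the covariance is a positive scalar multiple of the scatter matrix, the off-diagonal entry $b$ of the conditional scatter matrix satisfies $b\ge 0$. Inverting a positive-definite $2\times2$ matrix flips the sign of its off-diagonal entry, so the off-diagonal of $K_{\{i,j\}}$, namely $K_{ij}$, is $\le 0$. As $i\neq j$ were arbitrary, $K=\Sigma_f^{-1}$ is an M-matrix. The main obstacle, and the step I would be most careful with, is the conditioning identity: verifying that the conditional distribution of an elliptical vector is elliptical with scatter matrix equal to the Schur complement, and invoking block inversion to identify that Schur complement's inverse with the principal submatrix $K_{\{i,j\}}$ of the precision. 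A secondary point requiring care is the existence of the conditional second moments, which is guaranteed here because the unconditional covariance $\Sigma_f$ is assumed finite; this lets me argue through the covariance sign rather than through a direct Hessian computation on the (possibly nonsmooth) generator $g$.
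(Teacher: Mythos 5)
Your proposal is correct, but it takes a genuinely different route from the paper's proof. The paper works directly with the generator: after the same first step (the elliptical vector $Y=(f_1(X_1),\dots,f_M(X_M))$ inherits $\text{MTP}_2$, by \citet[Eq.~1.13]{karlin_and_rinott}), it applies the $\text{MTP}_2$ inequality to the antipodal points $x=s_1e_i-s_2e_j$ and $y=-x$, obtaining $g(Q-2s_1s_2c)\le g(Q+2s_1s_2c)$ with $Q=s_1^2K_{ii}+s_2^2K_{jj}$ and $c=K_{ij}$, and then pins down the sign of $c$ via an auxiliary analytic lemma: integrability forces $g$ to be strictly decreasing on some interval arbitrarily far to the right, and choosing $s_1s_2=1$ so that $Q$ lands on such an interval turns the displayed inequality into $c\le 0$ by monotonicity. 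Your argument instead runs entirely through closure properties: $\text{MTP}_2$ and ellipticity are both stable under conditioning on $Y_{\{i,j\}^c}$, the conditional precision of the pair equals the principal submatrix $K_{\{i,j\}}$ by the Schur-complement/block-inversion identity, and $\text{MTP}_2$ implies association and hence nonnegative conditional covariance, so the $2\times 2$ inversion sign flip gives $K_{ij}\le 0$. Each approach buys something: yours requires no smoothness of $g$ --- the paper's helper lemma assumes $g$ differentiable, a hypothesis absent from the theorem statement --- and it decomposes into standard, citable facts (Karlin--Rinott closure theorems, elliptical conditioning, the association property of $\text{MTP}_2$ distributions); its only extra need, finite conditional second moments, holds for almost every conditioning value by the law of total expectation since $\Sigma_f$ is assumed finite, as you note. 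The paper's proof is more elementary and self-contained (no conditioning theory for elliptical laws, no appeal to association), at the price of the differentiability assumption and a delicate interval-matching step on $g$. One small repair to your write-up: the Jacobian-factorization justification that $Y$ is $\text{MTP}_2$ tacitly assumes the $f_i$ are differentiable; it is cleaner (and sufficient, since $Y$ is assumed to have a density) to invoke preservation of $\text{MTP}_2$ under strictly increasing coordinatewise maps directly, as the paper does.
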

 
We prove \cref{thm:trans_m_mat} in \ref{sec:proofs}. While \cref{thm:trans_m_mat} shows that the covariance matrix of any elliptical distribution is an inverse M-matrix, the following example shows that, unlike in the Gaussian setting, this is not a sufficient condition for $\text{MTP}_2$.

 \begin{ex}
 Suppose $X$ is a two-dimensional  $t$-distribution with one degree of freedom and precision matrix
 \begin{equation*}
     \Sigma^{-1} = \begin{bmatrix}
    1 & -0.1 \\
    -0.1 & 1 \\
\end{bmatrix}.
 \end{equation*}
 Then $X$ is not $\text{MTP}_2$, since for $x = (-1, 1)$ and $y = (0, 0)$ its density function $p(\cdot)$ satisfies $p(x) p(y) > p(x \wedge y) p(x \vee y)$. \hfill \qed
\end{ex}

This shows that for transelliptical distributions, the constraint that $\Sigma^{-1}$ be an M-matrix is a relaxation of $\text{MTP}_2$. In terms of covariance matrix estimators for transelliptical distributions (without the  $\text{MTP}_2$ constraint), it was shown recently that replacing the sample covariance matrix $S$ in \cref{eq:graph_lasso} and \cref{eq:clime} by \emph{Kendall's tau correlation matrix} $S_\tau$ defined in \cref{Kendall_tau} yields consistent estimators of $\Sigma_f$~\citep{Liu_transellipticalgraphical,rocket}. This is quite remarkable, since it does not involve any changes to the objective function apart from replacing $S$ by $S_\tau$. Motivated by these results, we propose to extend the $\text{MTP}_2$ covariance matrix estimator from Section~\ref{sec:mult_gaus} to heavy-tailed distributions using the covariance matrix estimator in \cref{eq:mtp2_gaus} 
 by simply replacing the sample covariance matrix $S$~by~$S_\tau$. 
 
 \ra{In recent work, \citet{piotr_mtp2}  provide a number of interesting theoretical results for transelliptical distributions, including the theoretical analysis of our proposed $\text{MTP}_2$ relaxation above. They show that our relaxation for transelliptical distributions has a number of desirable properties, including positive partial correlations for arbitrary conditioning sets and the avoidance of Simpsons Paradox; see \citep[Proposition 4.12]{piotr_mtp2} for details. \citet{piotr_mtp2} further motivate this relaxation by showing that $\text{MTP}_2$ is in fact too strong of a constraint for (non-Gaussian) transelliptical distributions in Theorem 4.8 (for example, there does not exist \emph{any} transelliptical $\text{MTP}_2$ $t$-distributions).}

\section{Related Work} \label{sec:past_cov_estimations}
In this section, we review several models and techniques for covariance matrix estimation that are commonly used in financial contexts. We compare our method to these estimators in  \cref{sec:results}.

\subsection{Factor Models} \label{factor_models}
A common modeling assumption in financial applications is that the returns for day~$t$ are given by a linear combination of a (small) collection of latent factors $f_{k, t}$ for $1 \leq k \leq K$, which are either explicitly provided or estimated from the data. In such a factor model, the returns are modeled as
\begin{equation}
    \label{eq_b}
r_{i,t} = \alpha_i + \beta_i^T f_t + u_{i, t}, \quad f_t:= (f_{1,t}, \ldots, f_{K, t}),
\end{equation}
where $u_{i,t}$ is the idiosyncratic error term for asset $i$ that is uncorrelated with $f_t$. Letting $B\in \mathbb{R}^{K \times N}$ be the matrix whose $i$th column is $\beta_i$, the covariance matrix of the returns can be expressed as 
\begin{equation*} \label{eq:factor_cov_mat}
    \Sigma_{t} = B^T \Sigma_{f, t} B + \Sigma_{u, t}, \quad \text{for} \quad 1 \leq t \leq T,
\end{equation*}
where $\Sigma_{f,t} := \Cov(f_t)$ and $\Sigma_{u, t} := \Cov(u_t)$. In practice, $K \ll N$ factors are selected, making $B^T \Sigma_{f, t} B$ low-rank. 
This low-rank structure makes estimating $\Sigma_{t}$ easier since $\Sigma_{f, t}$ and $B$ only have $O(K^2)$ and $O(NK)$ free parameters, respectively. 
%
%
When $K \ll N$, and $K \ll T^2$, then by standard concentration of measure results, $\Sigma_{f, t}$ can be estimated well by $\hatSigma_{f, t}$, the sample covariance matrix of the factors. Similarly, by \cref{eq_b}, the $i$th row of $B$ can be estimated by regressing the returns of asset $i$ on the $K$ latent factors, for example using ordinary least-squares. 
In this case, $\hat{\beta}_i \approx \beta_i$ and hence the error $u_{i, t}$ is approximately equal to the residual $\hat{u}_{i, t} \coloneqq r_{i, t} - \hat{\beta}_i^T f_t - \hat{\alpha}_i$. Thus $\Sigma_{u, t}$ can be approximated by a covariance matrix estimate $\hat{\Sigma}_{u, t}$ based on the residuals. However, without additional assumptions on the structure of $\Sigma_{u, t}$, $\Sigma_{u, t}$ is not necessarily easier to estimate than $\Sigma_t$.   As a result, many estimators assume that $\Sigma_{u, t}$ has some special structure such as being diagonal or sparse (see below).


Several different types of factor models of varying complexity  have been considered in the literature: The general model in~\cref{eq_b} is known as a \emph{dynamic factor model}. A \emph{static factor model} assumes that the covariance matrices $\Sigma_{u,t}$ and $\Sigma_{f, t}$ are time-invariant, i.e., $\Sigma_{u,t}=\Sigma_u$ and $\Sigma_{f, t}=\Sigma_f$ do not depend on $t$. An \emph{exact factor model} furthermore assumes that the covariance matrix $\Sigma_{u}$ is diagonal, whereas an \emph{approximate factor model} assumes that $\Sigma_u$ has bounded $L^1$ or $L^2$ norm. In this paper, we concentrate on static estimators. The following static factor-based covariance matrix estimators are popularly used in financial applications. 
%
%
%
%
%
\begin{itemize}
    \item \textbf{POET: } is based on an approximate factor model and was first proposed in \cite{POET}. POET estimates $B^T \Sigma_{f, t} B$ by a rank $K$ truncated singular value decomposition (SVD) of the sample covariance matrix $\hat{\Sigma}$, which we denote by $\hat{\Sigma}_K$.  $\hatSigma_u$ is estimated by soft-thresholding the off-diagonal entries of the residual covariance matrix $S_{\hat{u}}  = \hat{\Sigma} - \hat{\Sigma}_K$ based on the method in \cite{bickel2008}.
    \item \textbf{EFM: } is an estimator based on the exact factor model using the Fama-French factors \citep{fama-french}. $\hatSigma_f$ equals the sample covariance matrix of the factors $\{ f_t \}$ and $\hatSigma_u$ equals the diagonal of $S_{\hat{u}}$. 
    \item \textbf{AFM-POET: } is an estimator based on an approximate factor model using the Fama-French factors. $\hatSigma_f$ is obtained as in EFM, whereas $\hatSigma_u$ is obtained by soft-thresholding $S_{\hat{u}}$ as in POET. 
\end{itemize}


\subsection{Shrinkage of Eigenvalues} \label{sec:shrinkage}

Another way to impose structure on the covariance matrix is through assumptions on the eigenvalues of the covariance matrix. Assuming that the true covariance matrix is well-conditioned, then the extreme eigenvalues of the sample covariance matrix are generally too small/large as compared to the true covariance matrix
~\citep{marcenko_dist, bai1993_issue_samp_cov}. This motivates the development of covariance matrix estimators such as \emph{linear shrinkage}~\cite{LS} and extensions thereof ~\citep[{cf.}][]{NLS,dynamic_cov_est} that shrink the eigenvalues of the sample covariance matrix for better statistical properties. 

To be more precise, let  
$$ S = \sum_{i=1}^{N} \lambda_i v_i v_i^T,$$ 
be the eigendecomposition of the sample covariance matrix $S$, where $\lambda_i$ denotes the $i$-th eigenvalue of $S$ and $v_i$ the corresponding eigenvector. Then the linear shrinkage estimator is given by
\begin{equation*} \label{eq:LS_formula}
\hatSigma_{LS} = \sum_{i=1}^{N} \gamma_i v_i v_i^T,
\end{equation*}
where $\gamma_i = \rho \lambda_i + (1-\rho) \bar{\lambda}$ with $\bar{\lambda}$ denoting the average of the eigenvalues of $S$ and $0<\rho<1$ a tuning parameter that determines the amount of shrinkage. Note that $\hatSigma_{LS}$ can equivalently be expressed as 
\begin{equation} \label{eq:LS_formula_2}
\hatSigma_{LS} = \rho S + (1-\rho) \bar{\lambda} I_N,
\end{equation}
where $I_N\in\mathbb{R}^{N\times N}$ denotes the identity matrix (Eq.~(\ref{eq:LS_formula_2}) follows from the uniqueness of the eigenvalue decomposition). 
Thus $\hatSigma_{LS}$ is obtained by shrinking the sample covariance matrix towards a multiple of the identity, which from a Bayesian point of view can also be interpreted as using the identity matrix as a prior for the true covariance matrix~\citep{LS}. 
The shrinkage estimator $\hatSigma_{LS}$ is asymptotically efficient given a particular choice of $\rho$ that depends on the sample covariance matrix $S$, its dimension $N$ (i.e., the number of assets) and the number of samples $T$ (i.e., the number of dates)~\citep{LS}. 

An extension of linear shrinkage, known as \emph{non-linear shrinkage}, considers non-linear transforms of the eigenvalues according to the Marcenko-Pastur distribution, which describes the asymptotic distribution of the eigenvalues of random matrices. This approach has been shown to 
outperform linear-shrinkage  empirically~\citep{NLS}. It is also common to combine shrinkage estimators with factor models (e.g., such as those introduced in \cref{factor_models}). 
For example, \emph{AFM-LS} and \emph{AFM-NLS} apply linear shrinkage and non-linear shrinkage, respectively, to the residuals (by regressing out the Fama-French factors) to estimate $\Sigma_u$ \citep{factormodelworkingpaper}.

\subsection{ Regularization of the Precision Matrix} \label{sec:graph_lasso}

Another common technique for covariance matrix estimation is to assume that the true underlying inverse covariance matrix $K^*:=(\hat{\Sigma}^*)^{-1}$, also known as the \emph{precision matrix}, is sparse, i.e.~that the number of non-zero entries in $K^*$ is bounded by an integer $\kappa>0$.
%
%
Since estimating $K$ under the constraint
\begin{equation} \label{eq:sparse_prec}
    \|K\|_0 \coloneqq \sum_{i \neq j} I[K_{ij} \neq 0] \leq \kappa
\end{equation}
is computationally intractable as it involves solving a difficult combinatorial optimization problem, a standard approach is to replace the $L_0$ constraint in \cref{eq:sparse_prec} by an $L_1$ constraint. In particular, assuming that the data follows a multivariate Gaussian distribution, then the $L_1$-regularized maximum likelihood estimator (also known as \emph{graphical lasso}) can be used to estimate $K$~\citep{graphical_lasso, RWRY11}. Maximum likelihood estimation under the  the $L_1$ constraint leads to the following convex optimization problem:
\begin{equation} \label{eq:graph_lasso}
    \hat{K} \coloneqq \argmax_{K \succeq 0} \ \log \det K - \text{trace}(KS) \quad \text{subject to} \quad  \|K\|_1 \leq \lambda,
\end{equation}
%
where $\lambda \geq 0$ is a tuning parameter. 
Instead of maximizing the log-likelihood, the popular \emph{CLIME} estimator \cite{Liu_transellipticalgraphical} finds a sparse estimate of the precision matrix by solving
\begin{equation} \label{eq:clime}
    \hat{K} \coloneqq \argmin_{K} \| K \|_1 \quad \text{subject to} \quad \|SK - I_N \|_{\infty} \leq \lambda.
\end{equation}
and has similar consistency guarantees as the graphical lasso in the Gaussian setting. 

To overcome the restrictive Gaussian assumption, 
recent work 
suggested replacing the sample covariance matrix $S$ in \cref{eq:graph_lasso} and \cref{eq:clime} by \emph{Kendall's tau} correlation matrix $S_\tau$ with 
$(S_\tau)_{ij}\coloneqq\sin (\frac{\pi}{2} \hat{\tau})$, where
 \begin{equation}\label{Kendall_tau}
    \hat{\tau}_{ij} \coloneqq \frac{1}{{T \choose 2}} \sum_{1 \leq t \leq t^\prime \leq T} \text{sign}(X_{it} - X_{it^{\prime}})\ \text{sign}(X_{jt} - X_{jt^{\prime}}).
\end{equation}
Interestingly, the resulting estimators can also be used for data from heavy-tailed distributions (including elliptical distributions such as the $t$-distribution) with almost no loss in efficiency~\citep{Liu_transellipticalgraphical,rocket}; see also Section~\ref{sec:MTP_heavy}.

\section{Empirical Evaluation} \label{sec:results}
In this section, we first describe both the data used for the evaluation and our experimental setup, which closely follows \citet{factormodelworkingpaper} for reproducibility. We then present our empirical evaluation of the various methods discussed in this paper based on the global minimum variance portfolio problem and the full Markovitz portfolio problem. {All data and code for this work is available at \href{https://github.com/uhlerlab/MTP2-finance}{https://github.com/uhlerlab/MTP2-finance}.}

\subsection{Data} \label{sec:methodology}

We use daily stock returns data 
from the Center for Research in Security Prices (CRSP), starting in 1975 and ending in 2015. We restrict our attention to stocks from the NYSE, AMEX and NASDAQ stock exchanges, and consider different portfolio sizes $N \in \{100, 200, 500\}$. As in \citet{factormodelworkingpaper}, $21$ consecutive trading days constitute one `month'. To account for distribution shift over time, we use a rolling out-of-sample estimator. That is, for each month in the out-of-sample period, we estimate the covariance matrix using the most recent $T$ daily returns, and update the portfolio monthly. We vary $T$ with $N$ to evaluate how sensitive different covariance estimators are with respect to increasing dimensionality. In particular, for a given $N$, we vary $T$ such that the ratio $N/T \in \{ \frac{1}{2}, 1, 2, 4\}$. We also include $T=1260$ (which corresponds to 5 years of market data) in order to replicate the results in \citet{factormodelworkingpaper}. We consider 360 months for evaluation, starting from  01/08/1986 and ending on 12/02/2015, using the portfolio and covariance updating strategy described above. We index each of these 360 investment periods by $h \in \{1, \ldots, 360 \}$. 

For each investment period and portfolio size, we vary the investment universe because many stocks do not have data for the entire period and the most relevant stocks (i.e. by market capitalization or volume) naturally vary over time. We use the same procedure as in \citet{factormodelworkingpaper} to construct the investment universe. Specifically, we consider the set of stocks that have (1) an almost complete return history over the most recent $T = 1260$ days and (2)  a complete return `future' in the next 21 days (which is the investment period). Next, we remove one stock in each pair of highly correlated stocks, defined as those with sample correlation exceeding $0.95$. More precisely, for each pair we remove the stock with the lower market capitalization for period $h$. Finally, we pick the largest $N$ stocks (as measured by their market capitalization on the investment date $h$) for the subsequent analysis. We use $I_{h, N}$ to denote this investment universe, where the subscripts emphasize the dependence on $N$ and $h$. 

\subsection{Competing Covariance Matrix Estimators} 

We compare the performance of the proposed MTP$_2$ covariance matrix estimator to the estimators described in \cref{sec:past_cov_estimations}. In addition, as a baseline, we also consider the equally weighted portfolio denoted by $\mathbf{1 / N}$. We evaluate each estimator in terms of its out-of-sample standard deviation (see \cref{sec:sd_metric}), Sharpe ratio (see \cref{sec:info_metric}), and information ratio (see \ref{A:add_figs}). These results are also summarized in \cref{table:STD} and \cref{table:sharpe}. 
In the following, we provide details regarding the implementation of the various covariance matrix estimators included in our empirical analysis.


\begin{itemize}
    \item \textbf{LS: } linear shrinkage, as described in Section \ref{sec:shrinkage}, applied to the sample covariance matrix.
    \item \textbf{NLS: } non-linear shrinkage, as described in Section \ref{sec:shrinkage}, applied to the sample covariance matrix; we used the implementation in the \verb+R+ package \verb+shrink+ \citep{shrink_R_package}.
    \item \textbf{AFM-LS: } approximate factor model, as described in Section \ref{factor_models}, with $5$ Fama-French factors and linear shrinkage applied to estimate the covariance matrix of the residuals.
    \item \textbf{AFM-NLS: } approximate factor model, as described in Section \ref{factor_models}, with $5$ Fama-French factors and non-linear shrinkage applied to estimate the covariance matrix of the residuals.
    \item \textbf{POET (k=3): } POET, as described in Section \ref{factor_models}, using the top $3$ principal components; 
    we used the implementation in the \verb+R+ package \verb+POET+.
    \item \textbf{POET (k=5): } POET, as described in Section \ref{factor_models}, using the top $5$ principal components; we used the implementation in the \verb+R+ package \verb+POET+. 
    \item \textbf{GLASSO: }  graphical lasso, as described in Section \ref{sec:graph_lasso}, using the \verb+python+ implementation in \verb+sklearn+ \citep{sklearn}; cross-validation is used to select the hyperparameter $\lambda$; we used the default parameters, i.e.~using 3-fold cross-validation and testing $\lambda$ on a grid of 4 points refined 4 times (the parameter values for $\alpha$ and $n_\text{iter}$ respectively). We note that this results in a biased estimator due to the $\ell_1$-penalty.
    \item \textbf{CLIME: } as described in Section \ref{sec:graph_lasso}; we used the implementation in the \verb+R+ package \verb+CLIME+ with hyperparameter $\lambda = \sqrt{(\log p)/n}$, which is asymptotically optimal; the CLIME estimator using this hyperparameter only exists when $T \geq N$ and hence we only benchmarked CLIME in this range. 
        \item \textbf{CLIME-KT: } CLIME estimator as described above but using Kendall's tau correlation matrix instead of the sample correlation matrix. Since Kendall’s tau correlation matrix is not singular, the CLIME-KT estimator exists even when
 when $T \leq N$.
        \item \textbf{MTP2: } our method, as described in Section~\ref{sec:mult_gaus}. We used the implementation from~\citet{mle_exists}, which is a computationally efficient coordinate-descent algorithm implemented in \verb+Matlab+\footnote{The implementation can be found at \url{https://sites.google.com/site/slawskimartin/code.}}.
         \item \textbf{MTP2-KT: } MTP2 estimator as described above but using Kendall's tau correlation matrix instead of the sample correlation matrix; see Section~\ref{sec:MTP_heavy}.
\end{itemize}

\subsection{Evaluation on the Global Minimum Variance Portfolio Problem} \label{sec:sd_metric}

For each fixed portfolio size $N$, estimation sample size $T$, and investment period $h$, we let $\hatSigma^{\mathcal{M}}_{T, h}(I_{h, N})$ denote the estimated covariance matrix between the assets in universe $I_{h, N}$ obtained using estimator $\mathcal{M}$. We then computed the portfolio weights $\hat{w}^{\mathcal{M}}_{h}$ via \cref{weight-estimate-formula} and the corresponding returns $r^{\mathcal{M}}_h$ for $h = 1, \dots, 360$. 
We estimated the portfolio standard deviation from these 360 returns for each estimator and multiplied each standard deviation by $\sqrt{12}$ to annualize. Note that a smaller standard deviation implies a lower variance portfolio, and hence better empirical performance. 

\begin{table}[b!]
\centering
\begin{tabular}{cc|ccccccc}
\toprule
N    &  T    &     1/N &      LS &                       NLS &                    AFM- &                   AFM- &                POET &                POET \\
&  &         &         &                           &               LS            &                     NLS      &            (k=3)               &       (k=5)                    \\
\midrule
100 & 50   &  18.724 &  13.452 &                    12.976 &                    13.159 &                    13.193 &  {12.498}* &                    12.617 \\
    & 100  &  18.724 &  13.695 &                    13.111 &                    13.135 &                    13.338 &  {11.994}* &                    12.595 \\
    & 200  &  18.724 &  12.560 &                    12.347 &                    12.357 &                    12.480 &                    12.348 &                    12.707 \\
    & 400  &  18.724 &  12.451 &                    12.347 &                    12.352 &                    12.344 &                    12.744 &                    13.255 \\
    & 1260 &  18.724 &  12.151 &                    12.122 &                    12.146 &                    12.130 &                    13.041 &                    12.722 \\
200 & 100  &  18.134 &  12.583 &                    12.320 &                    12.372 &                    12.406 &                    11.743 &  11.544 \\
    & 200  &  18.134 &  11.881 &                    11.603 &  11.556 &                    11.612 &                    11.881 &                    11.593 \\
    & 400  &  18.134 &  11.656 &  {11.431}* &                    11.552 &                    11.469 &                    12.559 &                    12.103 \\
    & 800  &  18.134 &  11.670 &  {11.424}* &                    11.531 &                    11.449 &                    13.019 &                    12.455 \\
    & 1260 &  18.134 &  11.665 &  {11.534}* &                    11.601 &                    11.568 &                    13.170 &                    12.898 \\
500 & 250  &  17.925 &  11.140 &                    10.516 &                    10.508 &                    10.517 &                    11.269 &  {10.203}* \\
    & 500  &  17.925 &  11.934 &  {10.793}* &                    10.913 &                    11.163 &                    11.833 &                    10.873 \\
    & 1000 &  17.925 &  11.373 &                    10.838 &                    10.856 &  {10.816}* &                    12.179 &                    11.917 \\
    & 1260 &  17.925 &  11.469 &  {10.943}* &                    11.005 &                    10.950 &                    12.395 &                    11.626 \\
\bottomrule
\end{tabular}

\bigskip

\begin{tabular}{cc|ccccc}
\toprule
 N   &  T    &                        GLASSO &   CLIME & CLIME-& MTP2 & MTP2- \\
 &  &                           &       &  KT &&  KT    \\
\midrule
100 & 50   &                     13.594 &     nan & 15.484 & 12.655 & 12.623  \\
   & 100&                       13.822 &     nan & 15.024& 12.327 & 12.049  \\
    & 200  &    13.985 &  14.945 & 15.140 & 11.858 & {11.742}*  \\
    & 400  &    13.607 &  15.127 & 15.223& 12.294 & {12.114}*  \\
    & 1260 &    13.631 &  15.253 & 15.316& {12.087}* & {12.087}*  \\
200 & 100  &                      13.522 &     nan & 14.983& 11.803 & {11.445}*  \\
    & 200  &                      13.719 &     nan & 14.344& 11.586 & {11.442}*  \\
    & 400  &                      13.920 &  14.563 & 14.964& 11.880 & 11.905  \\
    & 800  &                      14.096 &  14.778 & 14.862 & 11.635 & 11.661  \\
    & 1260 &                      13.958 &  15.013 & 15.013& 11.710 & 11.749  \\
500 & 250  &                      13.855 &     nan & 15.677& 10.455 & 10.512  \\
    & 500  &                      14.171 &     nan & 20.896& 11.009 &  11.261   \\
    & 1000 &                      14.283 &  15.523 & 14.330& 11.031 & 11.273  \\
    & 1260 &                      14.290 &  14.776 & 14.962& 11.187 &  11.422    \\
\bottomrule
\end{tabular}\qquad\qquad\quad
\caption{For each combination of $N$ (portfolio size), $T$ (estimation sample size), and covariance matrix estimator, we report the out-of-sample standard deviation of the returns of the portfolio. The most competitive value in each row is marked with an asterisk.
} \label{table:STD}
\end{table}

\cref{table:STD} summarizes the results for each estimator. Each row corresponds to a particular choice of $N$ (size of investment universe) and $T$ (estimation sample size). Each column corresponds to a different covariance matrix estimator. The best performing estimator in each row is marked with an asterisk. While no estimator outperforms all other estimators across all $N$ and $T$, \cref{table:STD} shows that the MTP2, non-linear shrinkage (NLS), and POET estimators perform consistently well in all settings. 

As discussed in Section \ref{sec:MTP_heavy}, to deal with the heavy-tailed nature of the distribution of returns,   Kendall's tau correlation matrix can be used instead of the sample correlation matrix in the CLIME and MTP2 estimators which assume Gaussianity. Columns CLIME-KT and MTP2-KT in Table~\ref{table:STD}  indicate that while using Kendall's tau correlation matrix usually does not make a significant difference in the performance, it can 
give a slight boost for the MTP2 estimator in particular when $N$ is 100 or 200. 


Instead of comparing the covariance matrix estimators only based on one number, the standard deviation of the returns of the resulting portfolios across the entire out-of-sample period, it is also of interest to examine the performance of each estimator \emph{throughout} the out-of-sample period. Figure~\ref{fig:lines} shows the standard deviation of the returns of the different estimators for $N \in \{100, 200,500\}$ and $T=1260$ when varying the out-of-sample period from 60 to $360$ (where $360$ is the maximal number of total out-of-sample months). Note that the ordering between the different estimators is  relatively consistent over time, indicating that the conclusions from the comparison of the different estimators in \cref{table:STD} would remain unchanged even when varying the  length of the out-of-sample period.

\begin{figure}
    \centering
    \includegraphics[width=1.0\textwidth]{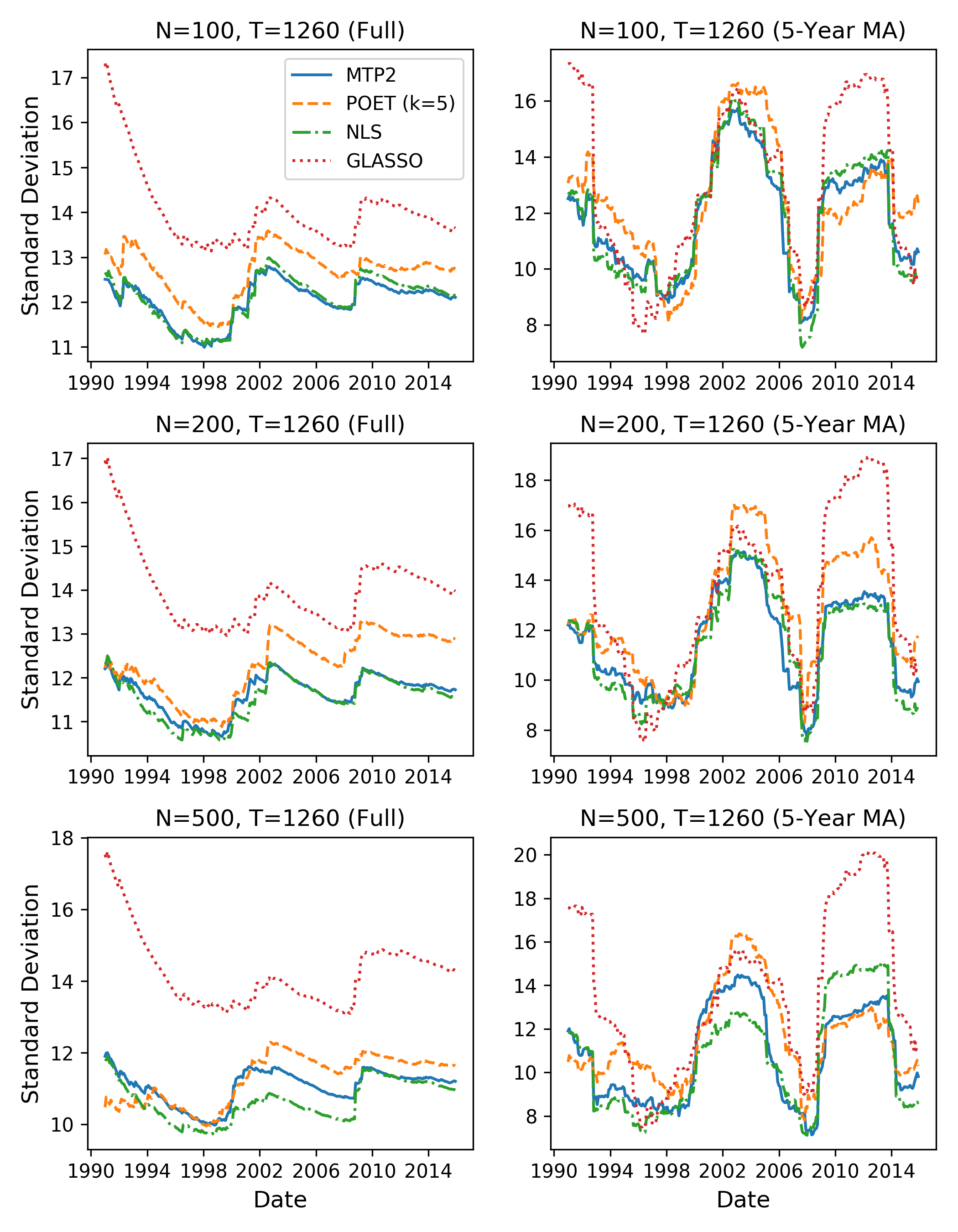}
    \vspace{-0.7cm}
    \caption{By varying the length of the out-of-sample period we examine the standard deviation of the returns obtained by each estimator throughout time. "Full" is the cumulative average while "5-Year MA" is a 5 year moving average. Lower is better.}
    \label{fig:lines}
\end{figure}

\subsection{Evaluation on Full Markowitz Portfolio Problem with Momentum Signal} \label{sec:info_metric}
We also benchmarked the different covariance matrix estimators based on the performance of the portfolios selected by solving \cref{eq:full_markowitz}, where $\Sigma^*_t$ is replaced by the  estimator. A standard performance metric is the \emph{Sharpe ratio}, which is the ratio between the  excess portfolio returns and the standard deviation of excess returns\footnote{We use 1 Year US Treasury Rates to compute the risk-free rate.}. Hence, a higher Sharpe ratio indicates better performance.


\begin{table}[b!]
\centering
\begin{tabular}{cc|ccccccc}
\toprule
N    & T     &  EQ-TW &     LS &    NLS &                   AFM- & AFM- & POET & POET \\
 &  &        &        &        &   LS                       &    NLS     &  (k=3)          &    (k=5)        \\
\midrule
100 & 50   &  0.544 &  0.348 &  0.361 &  0.334 &   0.338 &      0.462 &      0.496 \\
    & 100  &  0.544 &  0.328 &  0.397 &  0.344 &   0.340 &      0.486 &      0.394 \\
    & 200  &  0.544 &  0.374 &  0.419 &  0.389 &   0.376 &      0.500 &      0.413 \\
    & 400  &  0.544 &  0.437 &  0.471 &  0.502 &   0.475 &      0.532 &      0.474 \\
    & 1260 &  0.544 &  0.525 &  0.527 &  0.526 &   0.524 &      0.555 &      0.539 \\
200 & 100  &  0.599 &  0.423 &  0.433 &  0.413 &   0.428 &      0.448 &      0.439 \\
    & 200  &  0.599 &  0.498 &  0.471 &  0.474 &   0.468 &      0.432 &      0.443 \\
    & 400  &  0.599 &  0.545 &  0.559 &  0.566 &   0.568 &      0.528 &      0.513 \\
    & 800  &  0.599 &  0.649 &  0.636 &  0.640 &   0.643 &      0.461 &      0.571 \\
    & 1260 &  0.599 &  0.588 &  0.585 &  0.593 &   0.585 &      0.491 &      0.481 \\
500 & 250  &  0.599 &  0.649 &  0.639 &  0.641 &   0.638 &      0.538 &      0.664 \\
    & 500  &  0.599 &  0.628 &  0.609 &  0.653 &   0.668 &      0.534 &      0.685 \\
    & 1000 &  0.599 &  0.592 &  0.633 &  0.650 &   0.636 &      0.470 &      0.550 \\
    & 1260 &  0.599 &  0.595 &  0.628 &  0.646 &   0.642 &      0.505 &      0.589 \\
\bottomrule
\end{tabular}

\medskip

\begin{tabular}{cc|ccccc}
\toprule
 N   &  T    &                        GLASSO &   CLIME & CLIME-&MTP2 & MTP2- \\
 &  &                           &       & KT &     & KT    \\
 
 \midrule
100 & 50   &     0.589 &       nan &     0.548 &     0.554 &  {0.611}* \\
    & 100  &     0.616 &       nan &     0.589 &     0.594 &  {0.666}* \\
    & 200  &     0.589 &     0.580 &  {0.636}* &     0.585 &     0.634 \\
    & 400  &     0.603 &     0.608 &     0.578 &     0.590 &  {0.617}* \\
    & 1260 &  {0.605}* &     0.535 &     0.523 &     0.582 &     0.547 \\
200 & 100  &  {0.611}* &       nan &     0.593 &     0.514 &     0.594 \\
    & 200  &     0.587 &       nan &  {0.632}* &     0.563 &     0.594 \\
    & 400  &     0.597 &  {0.657}* &     0.568 &     0.573 &     0.581 \\
    & 800  &     0.596 &     0.605 &     0.552 &  {0.650}* &     0.627 \\
    & 1260 &     0.620 &     0.593 &     0.632 &  {0.638}* &     0.615 \\
500 & 250  &     0.639 &       nan &     0.341 &     0.755 &  {0.779}* \\
    & 500  &     0.623 &       nan &     0.313 &  {0.705}* &     0.674 \\
    & 1000 &     0.637 &     0.572 &  {0.818}* &     0.723 &     0.635 \\
    & 1260 &     0.635 &     0.585 &    0.539  &  {0.701}* &     0.635 \\
\bottomrule
\end{tabular}\qquad\qquad\;
\vspace{-0.2cm}
\caption{For each combination of $N$ (portfolio size), $T$ (estimation sample size), and covariance matrix estimator, we report the out-of-sample Sharpe ratio (the ratio between the  excess portfolio returns and the standard deviation of excess returns based on 1 Year US Treasury Rates).  The most competitive value in each row is marked with an asterisk.
} \label{table:sharpe}
\end{table}

We selected the desired expected returns level $R$ as in \citet{factormodelworkingpaper}. Namely, we considered the \emph{EW-TQ} portfolio which places equal weight on each of the top $20\%$ of assets (based on expected returns). We then set $R$ equal to the expected return of the EW-TQ portfolio. In addition, since the true vector of expected returns $\mu^*$ is unknown, we estimated it from the data. We do this using the momentum factor \citep{momentum} as in \citet{factormodelworkingpaper}, which for a given investment period $h$ and stock is the geometric average of returns of the previous year excluding the past month. 




The out-of-sample Sharpe ratio and information ratio of each estimator are shown in \cref{table:sharpe} and \ref{A:add_figs}, respectively. 
As in \cref{table:STD}, each row corresponds to a different choice of $N$ and $T$ and each column corresponds to a different estimator for both tables. The best performing estimator in each row is marked with an asterisk. This analysis shows that the MTP2 estimator achieves the best performance for almost all choices of $N$ and $T$. Although the results are similar, comparing MTP2 to MTP2-KT indicates that it is recommended to use Kendall's tau correlation matrix instead of the sample correlation matrix with the MTP2 estimator when $N$ is 100 or 200.  

Similar to Figure~\ref{fig:lines}, in Figure~\ref{fig:lines_2} we show the Sharpe ratio of the returns of the different estimators for $N \in \{100, 200,500\}$ and $T=1260$ when varying the out-of-sample period from 60 to $360$. Note that while the ordering between the different estimators is  still relatively consistent over time, it varies more than for the standard deviation plotted in Figure~\ref{fig:lines} and could provide additional valuable information regarding each estimator that is not captured in \cref{table:sharpe}.

%

\begin{figure}[!t]
    \centering
    \includegraphics[width=1.0\textwidth]{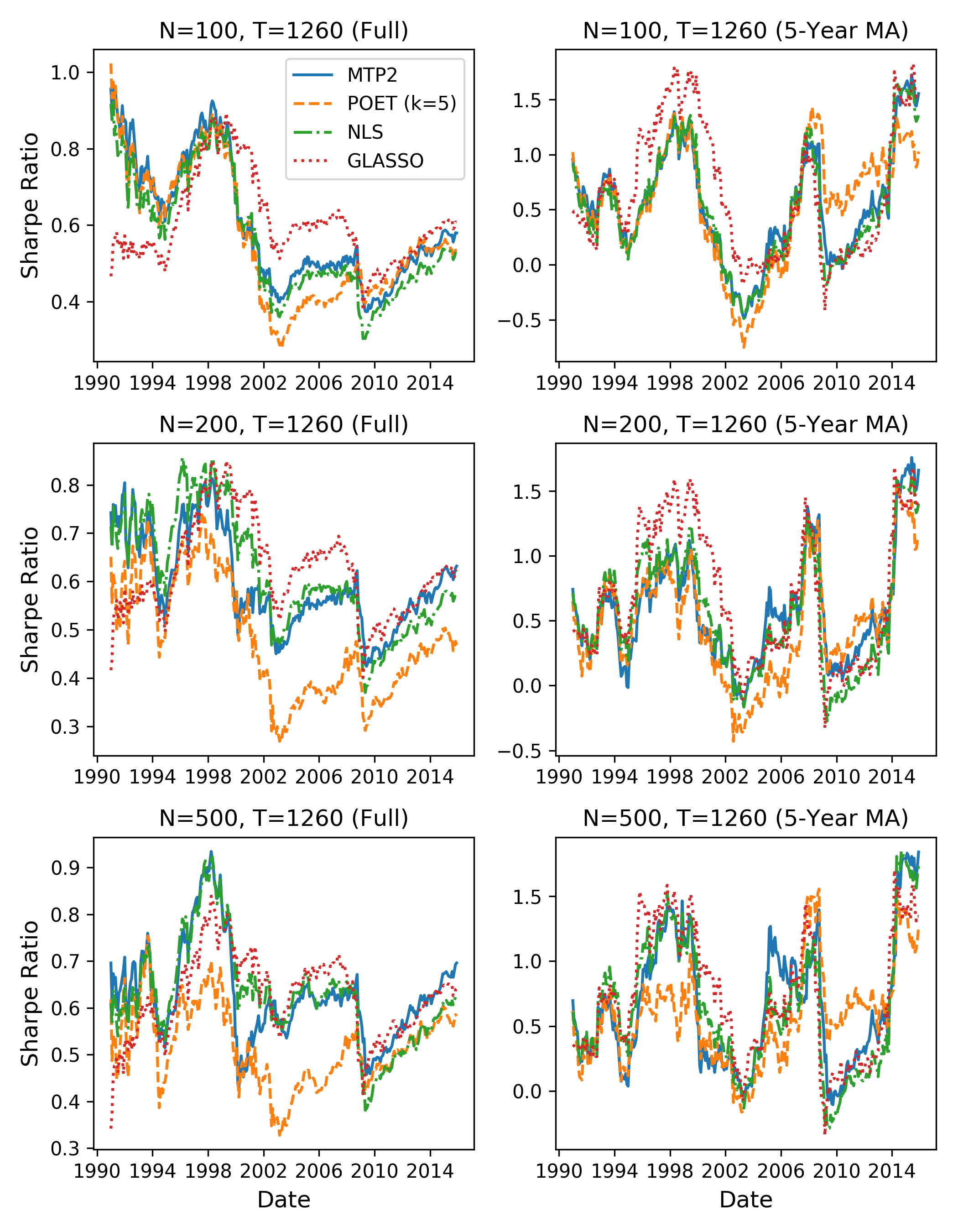}
    \vspace{-0.7cm}
    \caption{By varying the length of the out-of-sample period we examine the Sharpe ratio of the returns obtained by each estimator throughout time. "Full" is the cumulative average while "5-Year MA" is a 5 year moving average. Higher is better.}
    \label{fig:lines_2}
\end{figure}

\section{Conclusion}
In this paper, we proposed a new covariance matrix estimator for portfolio selection based on the assumption that returns are $\text{MTP}_2$, which is a strong form of positive dependence. While the $\text{MTP}_2$ assumption is strong, this constraint adds considerable regularization, thereby reducing the variance of the resulting covariance matrix estimator. Empirically, the added bias of $\text{MTP}_2$ is outweighed by the reduction in variance. In particular, the proposed $\text{MTP}_2$ estimator outperforms previous state-of-the-art covariance matrix estimators in terms of the Sharpe ratio and the information ratio.

In our empirical evaluation we observed that using Kendall tau's correlation matrix instead of the sample covariance matrix in the MLE under MTP$_2$ performed particularly well for a portfolio size of 100 or 200. It would therefore be of interest to analyze the theoretical properties of such covariance matrix estimators including MLE or CLIME under MTP$_2$ for heavy-tailed distributions. In addition, while we only considered  static covariance matrix estimators in this paper,  the $\text{MTP}_2$ estimator naturally extends to the dynamic setting, where the covariance matrix evolves over time. Specifically, we may adapt the techniques developed in \citet{dynamic_cov_est} to obtain a dynamic estimator under $\text{MTP}_2$. In future work, it would be interesting to compare the resulting estimator to other state-of-the-art dynamic covariance matrix estimators. Another interesting future direction is the theoretical analysis of the spectrum of symmetric M-matrices in the high-dimensional setting. 
If the $\text{MTP}_2$ constraint already implicitly regularizes the spectrum sufficiently, then shrinkage methods such as those developed in \citet{LS,NLS,dynamic_cov_est, Jagannathan2003, deMiguel2013} may be unnecessary under $\text{MTP}_2$. Alternatively, covariance matrix estimators under $\text{MTP}_2$ could be combined with shrinkage methods to potentially achieve even better performance.

\section*{Acknowledgements}
The authors were partially supported by NSF (DMS-1651995), ONR (N00014-17-1-2147 and N00014-18-1-2765), IBM, a Sloan Fellowship and a Simons Investigator Award to C.~Uhler. We would like to thank Michael Wolf for helpful discussions and providing us with code for preprocessing the CRSP dataset. We would also like to thank the organizers and participants of the 11th Society of Financial Econometrics (SoFiE) Conference for their encouragement and helpful discussions that led to the pursuit of this project; in particular, we thank Christian Gourieroux, Robert F.~Engle, Christian Brownlees, Federico Bandi and Andrew Patton.


\bibliographystyle{agsm}
\bibliography{ref}

\clearpage

\appendix
\section{Proofs} \label{sec:proofs}
The proof of \cref{thm:trans_m_mat} requires the following simple lemma.
 
 \begin{lem} \label{lem:helper}
 Suppose $g(x)$ is differentiable, non-negative, and $\int_{-\infty}^{\infty} g(x) dx = 1$. Then, for any $\delta, M > 0$, there exists an $x^* > M$ such that $g(\cdot)$ is strictly decreasing on the interval $(x^*, x^* + \delta)$.
 \end{lem}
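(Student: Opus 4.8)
The plan is to argue by contradiction, using only the fact that a probability density must decay in an averaged sense. Suppose the conclusion fails for some fixed $\delta, M > 0$, so that $g$ fails to be strictly decreasing on \emph{every} window $(x^*, x^* + \delta)$ with $x^* > M$. The one consequence of $\int_{-\infty}^{\infty} g = 1$ that I would lean on is decay of the tail: since $g \ge 0$, the tail integral $\int_M^{\infty} g$ is finite, and hence $\liminf_{x \to \infty} g(x) = 0$ (otherwise $g \ge c > 0$ on a half-line, forcing the integral to diverge). In particular there is a sequence $x_n \to \infty$ along which $g(x_n) \to 0$, while there is also a point $a > M$ at which $g(a) > 0$. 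The goal is then to show that the windowed failure of strict monotonicity is incompatible with these two facts.

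The first concrete step I would take is to convert the failure of strict monotonicity on a window into the existence of a point of non-negative slope inside it: if $g'(x) < 0$ held throughout some $(x^*, x^* + \delta)$, then $g$ would be strictly decreasing there, so the negation forces, in every such window, a point $\xi$ with $g'(\xi) \ge 0$. Because $g$ is assumed only differentiable (so $g'$ need not be continuous), I would invoke the Darboux intermediate-value property of derivatives rather than continuity when passing between signs of $g'$. The intended contradiction is to patch these recurring non-negative-slope points into a statement of global \emph{near}-monotonicity of $g$ on $(M, \infty)$, which, combined with $g$ being bounded away from $0$ somewhere and never settling to a nondecreasing profile, should clash with the integrability-forced return of $g$ to values near $0$.

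The hard part will be exactly the step of upgrading ``$g$ decreases somewhere'' to ``$g$ is strictly decreasing across an entire window of length $\delta$.'' Between a point where $g$ is bounded away from $0$ and a later point where $g$ is near $0$ there must be net decrease, but a naive mean-value argument only yields a single point, or a short interval, of negative slope; the net decrease can in principle be realized through rapid oscillation rather than through a genuinely monotone stretch of length $\delta$. Ruling this out is where the real work lies: one must quantify how much the windowed failure of monotonicity costs against the finite budget $\int_M^{\infty} g$, and I expect the argument to genuinely require that $g$ not vanish on $(M, \infty)$ and that $\delta$ be compatible with the scale on which $g$ decays. I would therefore treat this final patching/quantification step as the crux and the most delicate part of the proof.
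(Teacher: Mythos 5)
You are right to single out the ``patching'' step as the crux, and your worry about rapid oscillation is exactly on target --- but the correct conclusion is stronger than ``this is the delicate part'': the step cannot be completed, because the lemma as stated is false. Take $g(x) = C e^{-x^2}\bigl(2+\sin(x^4)\bigr)$ with $C$ a normalizing constant. This $g$ is smooth, strictly positive, and integrates to one, yet
$g'(x) = C e^{-x^2}\bigl(4x^3\cos(x^4) - 2x(2+\sin(x^4))\bigr)$
is positive at every point with $x^4 = 2k\pi$, $x>1$, and negative at every point with $x^4=(2k+1)\pi$; consecutive such points are only about $\pi/(4x^3)$ apart. Hence for any fixed $\delta>0$, once $x^*$ exceeds some $M_\delta$, \emph{every} window $(x^*,x^*+\delta)$ contains points where $g'>0$ and points where $g'<0$, so $g$ is strictly decreasing on no window of length $\delta$ in the tail. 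The moral is that integrability of $g$ controls only the total \emph{amount} of increase, not where or how finely increases are interleaved with decreases; so no argument from ``$\liminf g = 0$ plus recurring non-negative slopes'' can force a monotone stretch of prescribed length, which is what your final step would need.

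For comparison, the paper's own proof attempts exactly the patching you anticipated and founders at the same spot, twice. It first claims that $I=\{x: g'(x)>0\}$ has finite Lebesgue measure because $g\ge 0$ integrates to one; this is false for the $g$ above, where $\{g'>0\}$ occupies a fixed positive fraction of each oscillation period and so has infinite measure. It then asserts that the increasing subintervals $I_{x_j}$ have lengths $\Delta_{x_j}$ with $\inf_j \Delta_{x_j}>0$ ``by assumption,'' but nothing in the hypotheses prevents these stretches from shrinking as $x\to\infty$ --- again, precisely your oscillation scenario. So your proposal is not a weaker version of the paper's argument; both fail, and they fail for the same reason. The honest repair is to add a hypothesis on the elliptical generator, e.g.\ that $g$ is eventually monotone (equivalently, that $g'$ has finitely many sign changes), which holds for the standard generators such as the Gaussian and $t$ families; under that hypothesis the lemma is immediate, and the way it is invoked in the proof of \cref{thm:trans_m_mat} goes through unchanged.
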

 \begin{proof} Let $I = \{x: g^{\prime}(x) > 0 \}$. Then, the Lebesgue measure of $I$ is finite since $g(\cdot)$ is non-negative and integrates to one. 
 Suppose towards a contradiction that there was no such $x^*$. Then, for any $x > M$, $g(\cdot)$ is not  monotonically decreasing on $(x, x+\delta)$. Hence, by continuity of $g(\cdot)$, there exists an interval $I_x$ of length $\Delta_x$ contained in $(x, x+\delta)$ such that $g(\cdot)$ is monotonically increasing on $I_x$. Let $\bigsqcup_{j=1}^{\infty} I_{x_j}$ be some disjoint covering of $\{x: x > M\}$, where $I_{x_j} \coloneqq (x_j, x_j + \delta]$. Then, by our previous argument, $I_{x_j}$ contains an interval of length $\Delta_{x_j}$ where $g(\cdot)$ is monotonically increasing. By assumption, $\inf_{j} \Delta_{x_j} > 0$ and $\liminf_{j \rightarrow \infty} \Delta_{x_j} > 0$. Hence, $\sum_{j} \Delta_{x_j} = \infty$ which contradicts that $I$ has finite Lebesgue measure.
 \end{proof}

 \begin{proof}[Proof of \cref{thm:trans_m_mat}]
 Note that by \citet[Equation 1.13]{karlin_and_rinott}, if $X$ is $\text{MTP}_2$, then so is $(f_1(X_1), \cdots, f_M(X_M))$. Hence $\Sigma_{ij}\geq 0$ for all $i\neq j$. To complete the proof, we need to show that $(\Sigma^{-1})_{ij}\leq 0$ for all $i\neq j$. Without loss of generality, we assume that $\mu = 0$. We consider the two points $x = s_1 e_i - s_2 e_j$ and $y = -x$, where $e_k\in\mathbb{R}^M$ denotes the $k$-th unit vector and $s_i \in \R$. For ease of notation, let $\Sigma^{-1}_{i,i} = a, \Sigma^{-1}_{j,j} = b$, and $\Sigma^{-1}_{i,j} = \Sigma^{-1}_{j,i} = c$. 
 Notice that
 $$p(x) = p(y) = g(s_1^2 a + s_2^2 b - 2s_1 s_2c) \quad \text{and} \quad p(x \vee y) = (x \wedge y) = g(s_1^2a + s_2^2b + 2s_1s_2c).$$
 Hence, since $(f_1(X_1), \cdots, f_M(X_M))$ is $\text{MTP}_2$, it holds that
 \begin{equation*} \label{eq:proof_mtp2_step1}
    g(s_1^2 a + s_2^2 b - 2s_1 s_2c)^2 \leq g(s_1^2 a + s_2^2 b + 2s_1 s_2c)^2,
\end{equation*}
which simplifies to 
$g(s_1^2 a + s_2^2 b - 2s_1 s_2c)\leq g(s_1^2 a + s_2^2 b + 2s_1 s_2c)$. 
Let $s_2 = \frac{1}{s_1}$ and $\delta = 4|c|$. If $c=0$, the claim trivially holds. Therefore, suppose $|c| > 0$. Then, \cref{lem:helper} implies that there exists an $x^*$ such that $g(\cdot)$ is monotonically decreasing on $(x^*, x^* + 4|c|)$. Since the range of the function $h(s) = as^2 + \frac{b}{s^2}$ is $(M, \infty)$ for some $M > 0$, then by \cref{lem:helper} there must exist $s_1 \in \R$ such that $x^* = s_1^2 a + \frac{b}{s_1^2}$. Since $g(x^* - 2c) \leq g(x^* + 2c)$, then
\begin{equation*}
    x^* - 2c \geq x^* + 2c
\end{equation*}
by monotonicity, which implies $c < 0$ as desired. 
\end{proof}

\section{Information Ratio Results} \label{A:add_figs}

In \cref{sec:info_metric}, we compared the methods in terms of the Sharpe ratio. Here, we provide similar results except for the information ratio, which is the ratio between the expected portfolio returns and portfolio standard deviation.

\begin{table}[b!]
\centering
\begin{tabular}{cc|ccccccc}
\toprule
N    & T     &  EQ-TW &     LS &    NLS &                   AFM- & AFM- & POET & POET \\
 &  &        &        &        &   LS                       &    NLS     &  (k=3)          &    (k=5)        \\
\midrule
100 & 50   &  0.694 &  0.625 &  0.648 &                    0.617 &   0.621 &      0.760 &      0.791 \\
    & 100  &  0.694 &  0.600 &  0.682 &                    0.628 &   0.620 &      0.797 &      0.690 \\
    & 200  &  0.694 &  0.670 &  0.720 &                    0.691 &   0.675 &      0.802 &      0.706 \\
    & 400  &  0.694 &  0.736 &  0.772 &                    0.803 &   0.776 &      0.824 &      0.753 \\
    & 1260 &  0.694 &  0.831 &  0.834 &                    0.832 &   0.831 &      0.841 &      0.831 \\
200 & 100  &  0.757 &  0.719 &  0.735 &                    0.715 &   0.728 &      0.766 &      0.762 \\
    & 200  &  0.757 &  0.812 &  0.793 &                    0.796 &   0.790 &      0.747 &      0.764 \\
    & 400  &  0.757 &  0.864 &  0.885 &                    0.888 &   0.892 &      0.825 &      0.820 \\
    & 800  &  0.757 &  0.967 &  0.961 &                    0.962 &   0.967 &      0.747 &      0.870 \\
    & 1260 &  0.757 &  0.906 &  0.907 &                    0.913 &   0.906 &      0.773 &      0.770 \\
500 & 250  &  0.764 &  0.985 &  0.995 &                    0.997 &   0.993 &      0.869 &      1.030 \\
    & 500  &  0.764 &  0.940 &  0.955 &                    0.995 &   1.003 &      0.849 &      1.027 \\
    & 1000 &  0.764 &  0.918 &  0.976 &                    0.993 &   0.980 &      0.772 &      0.861 \\
    & 1260 &  0.764 &  0.920 &  0.967 &  0.984 &   0.982 &      0.806 &      0.909 \\
\bottomrule
\end{tabular}

\bigskip

\begin{tabular}{cc|ccccc}
\toprule
 N   &  T    &                        GLASSO &   CLIME & CLIME-&MTP2 & MTP2- \\
 &  &                           &       & KT &     & KT    \\
 
 \midrule
    100 & 50   &                      0.858 &                      nan & 0.788&0.849 &  {0.905}*  \\
    & 100  &                      0.885 &                      nan & 0.837&0.896 & {0.975}*   \\
    & 200  &                      0.855 &                    0.830 & 0.882&0.899 &  {0.950}*  \\
    & 400  &                      0.877 &                    0.852 &  0.823&0.892 & {0.924}*  \\
    & 1260 &                      0.878 &                    0.778 & 0.767&{0.890}* &  0.855 \\
200 & 100  &                      0.887 &                      nan & 0.844&0.829 & {0.918}*   \\
    & 200  &                       0.859 &                      nan & 0.896&0.885&  {0.919}* \\
    & 400  &                                        0.865 &  {0.916}* & 0.821&0.886 &  0.893  \\
    & 800  &                      0.862 &                    0.860 & 0.805&{0.970}* &  0.945 \\
    & 1260 &                      0.887 &                    0.845 & 0.885&{0.955}* &  0.931  \\
500 & 250  &                      0.908 &                      nan &  0.596&1.112 & {1.133}*  \\
    & 500  &                      0.887 &                      nan & 0.511 &{1.045}* & 1.005 \\
    & 1000 &                      0.897 &                    0.828 &  {1.101}*&1.061 & 0.993 \\
    & 1260 &                      0.896 &                    0.858 & 0.806 &{1.034}* & 0.958 \\
\bottomrule
\end{tabular}\qquad\qquad\;
\caption{For each combination of $N$ (portfolio size), $T$ (estimation sample size), and covariance matrix estimator, we report the out-of-sample information ratio (ratio of the average return to the standard deviation of return) of the portfolio.  The most competitive value in each row is marked with an asterisk.
} \label{table:IR_appendix}
\end{table}

\end{document}